\def\ifundefined#1{\expandafter\ifx\csname#1\endcsname\relax}
\newcommand{\keywords}[1]{\par\addvspace\baselineskip
\noindent\keywordname\enspace\ignorespaces#1}
\begin{document}

\mainmatter  

\title{A Proof on Asymptotics of Wavelet Variance of a Long Memory Process by Using Taylor Expansion}

\titlerunning{Technical Report 2010: Leibniz Institute for Neurobiology}

\author{Wonsang You\inst{1} \and Wojciech Kordecki\inst{2}}

\authorrunning{TR10014
: Asymptotics of Wavelet Variance of a Long Memory Process}

\tocauthor{Wonsang You (Leibniz Institute for Neurobiology),
Wojciech Kordecki (Wroc{\l}aw University of Technology)}
\institute{Special Lab Non-Invasive Brain Imaging,\\
Leibniz Institute for Neurobiology, Magdeburg, Germany\\
\email{you@ifn-magdeburg.de}
\and
Institute of Biomedical Engineering and Instrumentation,\\
Wroclaw University of Technology, Wroclaw, Poland\\
\email{wojciech.kordecki@pwr.wroc.pl}}

\maketitle

\begin{abstract}
A long memory process has self-similarity or scale-invariant properties in low frequencies. We prove that the log of the scale-dependent wavelet variance for a long memory process is asymptotically proportional to scales by using the Taylor expansion of wavelet variances. 
\keywords{long memory process, Taylor expansion}
\end{abstract}

\section{Introduction}
Many natural phenomena have shown self-similarity or fractal properties in diverse areas. It is well known that time series in neuroimaging such as electroencephalogram, magnetoencephalogram, and functional MRI have fractal properties \cite{bullmore04,maxim05}. Most of fractal time series can be well modeled as long memory processes.

A long memory process generally has scale-invariant properties in low frequencies; for example, the scale-dependent wavelet variance has asymptotically power law relationship in low-frequency scales, which implies its fractal property \cite{moulines07}. Such a property also has been demonstrated by Achard \textit{et al.} \cite{achard08} by exploiting the Taylor expansion of wavelet variances; regrettably, they did not provide the detailed proof. Here, we show the detailed procedure of Achard \textit{et al.}'s proof.

\section{Wavelet Variance of Long Memory Process}

A long memory process can be defined as follows as suggested by Moulines \textit{et al.} \cite{moulines07}; that is, a real-valued discrete process $ \mathbf{X}:=\left \{ X(t) \right \}_{t=1,\cdots,N} $ is said to have long memory if it is covariance stationary and have the spectral density
\begin{equation}
S_{\mathbf{X}}(f)=\left | 1-e^{-if} \right |^{-2d}S^{*}_{\mathbf{X}}(f) \label{stationarylm}
\end{equation}
for $ 0<d<1/2 $ where $ S_{\mathbf{X}}^{*}\left ( f \right ) $ is a non-negative symmetric function and is bounded on $ \left ( -\pi, \pi \right ) $.
\begin{theorem} \label{theorem:wv}
Let $W_{j,k} $ is the wavelet coefficient of the time series $ \mathbf{X} $ at the $j$-th scale and the $k$-th location, and let $\mathcal{H}_{j}(f) $ be the squared gain function of the wavelet filter such that
\begin{equation}
\mathcal{H}_{j}(f)\approx \left\{\begin{matrix}
2^j & \text{for } 2\pi/2^{j+1}\leq \left | f \right | \leq 2\pi/2^{j}\\ 
0 & \text{otherwise} 
\end{matrix}\right..
\end{equation} 
Then, $\gamma(j) :=\text{Var}\left ( W_{j,t} \right ) $ satisfies
\begin{equation}
\gamma(j) =2\pi\int_{-\pi}^{\pi}\mathcal{H}_{j}(f)S_{\mathbf{X}}(f)df.
\end{equation}
\end{theorem}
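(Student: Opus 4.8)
The identity is, at bottom, the spectral representation (Parseval) formula for a linearly filtered covariance-stationary process, so the plan is to make that filtering explicit and then keep careful track of the normalization constant, rather than to prove anything analytically deep. First I would record that the scale-$j$ wavelet coefficients arise from $\mathbf{X}$ by a linear, time-invariant filter: there is a finitely supported (hence absolutely summable) sequence $\{h_{j,l}\}_{l}$, the scale-$j$ equivalent wavelet filter, with
\begin{equation}
W_{j,t}=\sum_{l}h_{j,l}\,X(t-l),\qquad H_{j}(f):=\sum_{l}h_{j,l}\,e^{-ifl},\qquad \mathcal{H}_{j}(f)=\bigl|H_{j}(f)\bigr|^{2}.
\end{equation}
Because $\mathbf{X}$ is covariance stationary and the filter does not depend on $t$, the output $\{W_{j,t}\}_{t}$ is again covariance stationary, so $\mathrm{Var}(W_{j,t})$ does not depend on $t$ and $\gamma(j)$ is well defined; moreover the wavelet filter has zero gain at $f=0$, so $W_{j,t}$ is automatically centered. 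If the transform additionally decimates by $2^{j}$, this merely selects a subsequence and leaves the common variance unchanged.

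Next I would compute that variance, in whichever of the two equivalent ways the paper prefers. The elementary route: writing $\gamma_{\mathbf{X}}$ for the autocovariance of $\mathbf{X}$ and expanding the convolution gives $\gamma(j)=\sum_{l}\sum_{m}h_{j,l}h_{j,m}\,\gamma_{\mathbf{X}}(l-m)$; inserting the Fourier inversion of $S_{\mathbf{X}}$ consistent with the normalization in \eqref{stationarylm}, namely $\gamma_{\mathbf{X}}(\tau)=2\pi\int_{-\pi}^{\pi}e^{if\tau}S_{\mathbf{X}}(f)\,df$, and interchanging the (finite) sums with the integral, the double sum collapses to $H_{j}(f)\overline{H_{j}(f)}=\mathcal{H}_{j}(f)$, yielding
\begin{equation}
\gamma(j)=2\pi\int_{-\pi}^{\pi}\mathcal{H}_{j}(f)\,S_{\mathbf{X}}(f)\,df,
\end{equation}
which is the claim. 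The spectral route states the same thing: feeding the Cramér representation $X(t)=\int_{-\pi}^{\pi}e^{ift}\,dZ(f)$ through the filter gives $W_{j,t}=\int_{-\pi}^{\pi}H_{j}(f)e^{ift}\,dZ(f)$, and orthogonality of the increments of $Z$ together with $\mathbb{E}\,|dZ(f)|^{2}=2\pi\,S_{\mathbf{X}}(f)\,df$ produces the integral directly.

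The part I would be most careful about is not an estimate but the bookkeeping of constants: the factor $2\pi$ in the statement is precisely the constant in the Fourier pair used to define $S_{\mathbf{X}}(f)$ in \eqref{stationarylm}, so the argument must fix that convention once and use it consistently; with a different normalization the very same proof yields a different prefactor. The two remaining points are routine: absolute summability of $\{h_{j,l}\}$ (the base wavelet filter is FIR and $H_{j}$ is obtained by cascading it, hence a trigonometric polynomial) justifies the interchange of sum and integral, and boundedness of $S_{\mathbf{X}}^{*}$ on $(-\pi,\pi)$ makes the integral finite. The approximate band-pass shape asserted for $\mathcal{H}_{j}$ in the hypothesis plays no role in this exact identity; it is invoked only afterwards, when the power-law asymptotics of $\gamma(j)$ are extracted from this integral formula.
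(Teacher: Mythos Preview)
Your argument is correct: the identity is nothing more than the standard ``filtered stationary process'' formula, and both the autocovariance/Parseval route and the Cram\'er-representation route you sketch are valid derivations. Your care with the $2\pi$ normalization is well placed, since that constant is exactly what distinguishes this statement from the textbook version.

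That said, there is nothing to compare against: the paper does not prove Theorem~\ref{theorem:wv}. It is stated without argument as a known spectral identity (in the spirit of Percival--Walden \cite{percival00}), and the paper's actual work begins with the proof of Corollary~\ref{corollary:awv}, which takes this formula as its starting point. So your proposal supplies a proof the paper simply omits; it is sound, and the choice between your two routes is a matter of taste.
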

Achard \textit{et al.} \cite{achard08} made its Taylor expansion to estimate the long memory parameter of $\mathbf{X}$ as shown in the corollary \ref{corollary:awv}. We give the detailed proof of the corollary \ref{corollary:awv}. 
\begin{corollary} \label{corollary:awv}
Let $ S_{\mathbf{X}}^{*}\left( f \right)=1+\beta f^{2}+o\left ( f^{2} \right ) $. Then, 
\begin{equation}
\gamma(j)=2\pi 2^{j+1}\int_{2\pi /2^{j+1}}^{2\pi / 2^{j}}\text{Re}\left ( S(f) \right )\,df = \widetilde{K}2^{2dj}\left(1 + a_{2}\frac{1}{2^{2j}} + o\left( \frac{1}{2^{2j}} \right) \right)
\end{equation}
where
\begin{equation}\label{eq:tildeK}
\widetilde{K}=2\frac{ 1-1/2^{1-2d}}{1-2d}\left ( 2\pi \right )^{2-2d},
\end{equation}
\begin{equation}\label{eq:a2}
a_{2}=\left ( 2\pi \right )^{2}\left ( \frac{d}{12} +\beta\right )\frac{\left ( 1-1/2^{3-2d} \right )/\left ( 3-2d \right )}{\left ( 1-1/2^{1-2d} \right )/\left ( 1-2d \right )}\,.
\end{equation}
\end{corollary}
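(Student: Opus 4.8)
The plan is to substitute the explicit band‑pass form of $\mathcal{H}_{j}$ into the integral identity of Theorem~\ref{theorem:wv} and then expand the spectral density in powers of $f$ before integrating term by term. First I would exploit the shape of the gain function: since $\mathcal{H}_{j}(f)\approx 2^{j}$ on the symmetric pair of bands $\pi/2^{j}\le|f|\le 2\pi/2^{j}$ and vanishes elsewhere, and since $S_{\mathbf{X}}$ is even, the formula $\gamma(j)=2\pi\int_{-\pi}^{\pi}\mathcal{H}_{j}(f)S_{\mathbf{X}}(f)\,df$ collapses to $\gamma(j)=2\pi\,2^{j+1}\int_{2\pi/2^{j+1}}^{2\pi/2^{j}}S_{\mathbf{X}}(f)\,df$, which is the first displayed equality of the corollary (here $S_{\mathbf{X}}$ from \eqref{stationarylm} is real and non‑negative, so the real‑part operation there is vacuous, and the $\approx$ in $\mathcal{H}_{j}$ is treated as an equality for this computation, exactly as stated).

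Next I would make the small‑$f$ behaviour of the integrand explicit. From $|1-e^{-if}|^{2}=4\sin^{2}(f/2)$ one gets $|1-e^{-if}|^{-2d}=(2\sin(f/2))^{-2d}$ on the relevant band; the Taylor series $2\sin(f/2)=f\left(1-\tfrac{1}{24}f^{2}+o(f^{2})\right)$ then gives $(2\sin(f/2))^{-2d}=f^{-2d}\left(1+\tfrac{d}{12}f^{2}+o(f^{2})\right)$, where the coefficient $\tfrac{d}{12}$ comes from $(1+x)^{-2d}=1-2dx+O(x^{2})$ with $x=-\tfrac{1}{24}f^{2}$. Multiplying by the hypothesis $S^{*}_{\mathbf{X}}(f)=1+\beta f^{2}+o(f^{2})$ yields $S_{\mathbf{X}}(f)=f^{-2d}\left(1+\left(\tfrac{d}{12}+\beta\right)f^{2}+o(f^{2})\right)$, which already explains why the combination $\tfrac{d}{12}+\beta$ appears in \eqref{eq:a2}.

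Then I would integrate over $[\,\pi/2^{j},\,2\pi/2^{j}\,]$ using the two elementary antiderivatives $\int f^{-2d}\,df=f^{1-2d}/(1-2d)$ and $\int f^{2-2d}\,df=f^{3-2d}/(3-2d)$ (both denominators are nonzero since $0<d<1/2$). Writing $b=2\pi/2^{j}$, so that the lower limit is $b/2$, the leading term contributes $2\pi\,2^{j+1}\cdot b^{1-2d}\,(1-2^{-(1-2d)})/(1-2d)$, and regrouping the powers of $2\pi$ and of $2^{j}$ gives exactly $\widetilde{K}\,2^{2dj}$ with $\widetilde{K}$ as in \eqref{eq:tildeK}; the $f^{2}$‑term contributes $2\pi\,2^{j+1}\left(\tfrac{d}{12}+\beta\right)b^{3-2d}\,(1-2^{-(3-2d)})/(3-2d)$, which after dividing through by the leading coefficient is $\widetilde{K}\,2^{2dj}\,a_{2}\,2^{-2j}$ with $a_{2}$ as in \eqref{eq:a2} (the residual powers combining into $(2\pi)^{2}$ and $2^{-2j}$). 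Collecting the three pieces gives $\gamma(j)=\widetilde{K}2^{2dj}\left(1+a_{2}2^{-2j}+o(2^{-2j})\right)$.

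The one point needing genuine care — and the main obstacle — is the treatment of the $o(f^{2})$ remainder \emph{under the integral sign}, since the band $[\pi/2^{j},2\pi/2^{j}]$ shrinks to $0$ as $j\to\infty$. Writing the remainder as $\varepsilon(f)f^{2}$ with $\varepsilon(f)\to 0$ as $f\to 0$, one bounds $\int_{b/2}^{b}f^{-2d}\varepsilon(f)f^{2}\,df$ by $\big(\sup_{[b/2,b]}|\varepsilon|\big)\int_{b/2}^{b}f^{2-2d}\,df$; the supremum tends to $0$ while $\int_{b/2}^{b}f^{2-2d}\,df$ has exact order $2^{-j(3-2d)}$, so this contribution is truly $o$ of the $f^{2}$‑term, i.e.\ $o(2^{2dj}2^{-2j})$, which is precisely what licenses the final $o(2^{-2j})$ inside the bracket. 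Everything else is bookkeeping of powers of $2$, $\pi$ and $d$.
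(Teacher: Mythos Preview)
Your proposal is correct and follows essentially the same route as the paper's own proof: Taylor-expand $S_{\mathbf{X}}(f)$ about $f=0$ and integrate the resulting powers of $f$ over the shrinking band $[2\pi/2^{j+1},2\pi/2^{j}]$. The only difference is organizational --- the paper keeps the two terms of the $\sin^{-2d}(f/2)$ expansion separate and distributes against $1+\beta f^{2}+o(f^{2})$ to obtain six integrals $I_{1},\dots,I_{6}$ which it then recombines (noting $I_{3}=I_{2}/4$), whereas you merge the $d/12$ and $\beta$ coefficients \emph{before} integrating and thus need only the two elementary antiderivatives of $f^{-2d}$ and $f^{2-2d}$; your $\sup|\varepsilon|$ remainder bound is also slightly more explicit than the paper's one-line justification of $I_{5}$ and $I_{6}$.
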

\begin{proof}
It is well known 
we have the following Taylor series
\begin{equation}\label{eq:est_sin2}
\sin^{-2d}\left ( f/2 \right )=\left ( f/2 \right )^{-2d}+\frac{2d}{6}\left ( \frac{f}{2} \right )^{2\left ( 1-d \right )}+o\left ( f^3 \right ).
\end{equation}
From \eqref{stationarylm} and \eqref{eq:est_sin2},
\begin{equation}\label{eq:est_S(f)}
S\left ( f \right )=2^{-2d}\left ( \left ( \frac{f}{2} \right )^{-2d}+\frac{2d}{6}\left ( \frac{f}{2} \right )^{2\left ( 1-d \right )} \right )\left ( 1+\beta f^{2}+o\left ( f^{2} \right ) \right ).
\end{equation}
Let us define four integrals:
\begin{equation}\label{eq:I_1}
I_1:=\int_{2\pi /2^{j+1}}^{2\pi/2^{j}}
\left ( \frac{f}{2} \right )^{-2d}df=
\frac{\pi\left({2}^{2d}-2 \right){2}^{j\left ( 2d-1 \right )}}{\left( 2d-1\right){\pi }^{2d}}:=A_{1}2^{j(2d-1)},
\end{equation}
\begin{equation}\label{eq:I_2}
I_2:=\int_{2\pi /2^{j+1}}^{2\pi/2^{j}}
\left ( \frac{f}{2} \right )^{-2d}f^2 \,df 
=\frac{{\pi }^{3}\left({2}^{2d}-8\right){2}^{j\left ( 2d-3 \right )}}{\left( 2d-3\right)\pi^{2d}},
\end{equation}
\begin{equation}\label{eq:I_3}
I_3:=\int_{2\pi /2^{j+1}}^{2\pi/2^{j}}
\left ( \frac{f}{2} \right )^{2\left ( 1-d \right )}df= 
\frac{{\pi}^3\left ( 2^{2d}-8 \right ) 2^{j\left ( 2d-3 \right )}}{4\left ( 2d-3 \right )\pi^{2d}},
\end{equation}
\begin{equation}\label{eq:I_4}
I_4:=\int_{2\pi /2^{j+1}}^{2\pi/2^{j}}
\left ( \frac{f}{2} \right )^{2\left ( 1-d \right )}f^2\,df=
\frac{{\pi}^{5}\left ( 2^{2d}-32 \right )2^{j\left ( 2d-5 \right )}}{4\left ( 2d-5 \right )\pi^{2d}}:=A_{4}2^{j(2d-5)}.
\end{equation} 
Since if $h\left ( \Delta f \right )\to 0$ for $\Delta f\to 0$ then $ \int_a^{a+\Delta f} h\left ( \Delta f \right )\,df=o( \left ( \Delta f \right )^2 ) $,
\begin{equation}\label{eq:I_5}
I_5:=\int_{2\pi /2^{j+1}}^{2\pi/2^{j}}
\left ( \frac{f}{2} \right )^{-2d}o(f^2)\,df=2^{2dj}o\left ( \frac{1}{2^{2j}} \right ),
\end{equation} 
\begin{equation}\label{eq:I_6}
I_6:=\int_{2\pi /2^{j+1}}^{2\pi/2^{j}}
\left ( \frac{f}{2} \right )^{2\left ( 1-d \right )}o(f^2)\,df=2^{2dj}o\left ( \frac{1}{2^{4j}} \right ),
\end{equation} 
From \eqref{eq:est_S(f)} and the integrals \eqref{eq:I_1}-\eqref{eq:I_6},
\begin{IEEEeqnarray*}{rCl}\label{eq:S(f)-int-sol}
\gamma(j)&=&2\pi 2^{j+1}\int_{2\pi /2^{j+1}}^{2\pi / 2^{j}}\text{Re}\left ( S(f) \right )\,df\\
&=&\pi 2^{j+2(1-d)}\left [ I_1 + \beta I_2 + \frac{2d}{6}I_3 + \frac{2d}{6}\beta I_4 + I_5 + \frac{2d}{6}I_6 \right ]\\
&=&\pi 2^{j+2(1-d)}\left [ I_1 + \beta I_2 + \frac{2d}{6}I_3 + \frac{2d}{6}\beta I_4 + 2^{2dj}o\left ( \frac{1}{2^{2j}} \right ) \right ]\\
\end{IEEEeqnarray*}
Note that $ I_{3}=I_{2}/4 $ and
\begin{equation}\label{eq:I1I2}
I_2=\pi^2\frac{2^{2d}-8}{2^{2d}-2}\,\frac{2d-1}{2d-3}\,\frac{1}{2^{2j}}I_1
=\left ( 2\pi \right )^2\frac{1-2^{2d-3}}{1-2^{2d-1}}\,\frac{1-2d}{3-2d}\,\frac{1}{2^{2j}}I_1=M_{12}2^{-2j}I_{1}.
\end{equation}
Thus,
\begin{IEEEeqnarray*}{rCl}\label{eq:S(f)-int-sol2}
\gamma(j)&=&\pi 2^{j+2(1-d)}\left [ \left ( 1+\frac{a_2}{2^{2j}} \right )I_1 + \frac{2d}{6}\beta I_4 +2^{2dj}o\left ( \frac{1}{2^{2j}} \right ) \right ]\\
&=&\pi 2^{j+2(1-d)}\left [ \left ( 1+\frac{a_2}{2^{2j}} \right )A_1 2^{j(2d-1)} + \frac{2d}{6}\beta A_4 2^{j(2d-5)} +2^{2dj}o\left ( \frac{1}{2^{2j}} \right ) \right ]\\
\end{IEEEeqnarray*}
where
\begin{IEEEeqnarray*}{rCl}\label{eq:S(f)-int-sol3}
a_2&:=&\left ( \beta + d/12 \right )M_{12}\\
&=&\left ( 2\pi \right )^{2}\left ( \frac{d}{12} +\beta\right )\frac{\left ( 1-1/2^{3-2d} \right )/\left ( 3-2d \right )}{\left ( 1-1/2^{1-2d} \right )/\left ( 1-2d \right )}.\\
\end{IEEEeqnarray*}
Then, we can have
\begin{IEEEeqnarray*}{rCl}\label{eq:S(f)-int-sol4}
\gamma(j)&=&\widetilde{K}2^{2dj}\left [ 1+a_2 \frac{1}{2^{2j}} + \frac{2d}{6} \frac{A_4}{A_1} \beta \frac{1}{2^{4j}} + \frac{1}{\widetilde{K}}o\left ( \frac{1}{2^{2j}} \right )\right ]\\
&=&\widetilde{K}2^{2dj}\left [ 1+a_2 \frac{1}{2^{2j}} + o\left ( \frac{1}{2^{2j}} \right )\right ]\\
\end{IEEEeqnarray*}
where
\begin{equation}
\widetilde{K}:=\pi A_1 2^{2(1-d)}=2\frac{ 1-1/2^{1-2d}}{1-2d}\left ( 2\pi \right )^{2-2d}.
\end{equation}
\end{proof}

In the corollary \eqref{corollary:awv}, as the scale $ j $ increases, the wavelet variance $ \gamma(j) $ converges; in other words,
\begin{equation}
\gamma(j) \to \widetilde{K}2^{2dj}.
\end{equation}
It implies that we can simply estimate the long memory parameter $ d $ by linear regression method if we have the estimates of wavelet variances.

\section{Discussion}
We verified Achard \textit{et al.}'s proof on asymptotic properties of wavelet variances of a long memory process. Unfortunately, Achard \textit{et al.}'s Taylor expansion of wavelet variance is not perfectly consistent with our result; while $ \widetilde{K} = 2\frac{ 1-1/2^{1-2d}}{1-2d}\left ( 2\pi \right )^{2-2d} $ in our result, they computed as $ \widetilde{K} = 2\frac{ 1-1/2^{1-2d}}{1-2d}\left ( 2\pi \right )^{1-2d} $.

On the other hand, the assumption on short memory such that $ S_{\mathbf{X}}^{*}\left ( f \right )=1+\beta f^{2}+o\left ( f^{2} \right ) $ loses generality of their proof since there exist more general classes of short memory; indeed, Moulines \textit{et al.} proved that the asymptotics of wavelet variances hold when the short memory $ S_{\mathbf{X}}^{*}\left ( f \right ) $ belongs to the function set $ \mathcal{H}\left ( \beta,L \right ) $ defined as the set of even non-negative functions $ g $ on $ \left [ -1/2, 1/2 \right ] $ such that
\begin{equation}
\left | g(f) - g(0) \right | \leq Lg(0)\left | 2 \pi f \right |^{\beta}. \label{gsd}
\end{equation}
Nevertheless, their method is relatively simpler than other mathematical proofs. Moreover, we will be able to apply their method to investigate the asymptotic properties of multivariate long memory processes as Achard \textit{et al.} already attempted for bivariate long memory processes \cite{achard08}.


\begin{thebibliography}{99}

\bibitem{bullmore04} E. Bullmore, J. Fadili, V. Maxim, L. Sendur, B. Whitcher, J. Suckling, et al., \textit{Wavelets and functional magnetic resonance imaging of the human brain.}, NeuroImage. 23 Suppl 1 (2004) S234--49.

\bibitem{maxim05} V. Maxim, L. \c{S}endur, J. Fadili, J. Suckling, R. Gould, R. Howard, et al., \textit{Fractional Gaussian noise, functional MRI and Alzheimer's disease}, NeuroImage. 25 (2005) 141--158.

\bibitem{achard08} S. Achard, D.S. Bassett, A. Meyer-Lindenberg, E. Bullmore, \textit{Fractal connectivity of long-memory networks}, Physical Review E. 77 (2008) 1--12.

\bibitem{moulines07} E. Moulines, F. Roueff, M. Taqqu, \textit{On the Spectral Density of the Wavelet Coefficients of Long-Memory Time Series with Application to the Log-Regression Estimation of the Memory Parameter}, Journal Of Time Series Analysis. 28 (2007) 155--187.

\bibitem{percival00} D.B. Percival, A.T. Walden, \textit{Wavelet methods for time series analysis}, Cambridge University Press, 2000.



\end{thebibliography}
\end{document}